\documentclass[conference]{IEEEtran}
\usepackage{amsmath}
\usepackage{bm}
\usepackage{amssymb}
\usepackage{graphicx}
\usepackage{color, colortbl} 
\usepackage[dvipsnames,svgnames,x11names]{xcolor}
\makeatother
\usepackage{algpseudocode,algorithm}
\usepackage{hyperref}
\usepackage{enumerate}
\usepackage[caption=false,font=footnotesize]{subfig}
\usepackage{cite}
\usepackage{url,comment}
\usepackage[normalem]{ulem}
\usepackage[shortlabels]{enumitem}
\usepackage{amsthm}
\usepackage{pgfplots}
\usepackage{tikz}
\usepackage{mathtools}
\usepackage{balance}
\usepackage{acronym}
\usepackage{amssymb}
\usepackage[dvipsnames]{xcolor}
\usepackage{cite}
\usetikzlibrary{calc}
\makeatletter
\newcommand{\gettikzxy}[3]{%
  \tikz@scan@one@point\pgfutil@firstofone#1\relax
  \edef#2{\the\pgf@x}%
  \edef#3{\the\pgf@y}%
}
\usepackage{amsmath,amssymb,amsfonts,bm}
\usepackage{xcolor}
\usepackage{relsize}
\usepackage{amsmath,amsthm,amssymb,amsfonts,bm}
\usepackage{comment}
\usepackage{amsmath}
\usepackage{acronym}
\newtheorem{proposition}{Proposition}
\newtheorem{rem}{Remark}
\hyphenation{op-tical net-works semi-conduc-tor}
\acrodef{bs}[BS]{base station}
\acrodef{ue}[UE]{user equipment}
\acrodef{los}[LOS]{line-of-sight}
\acrodef{aoa}[AoA]{angle-of-arrival}
\acrodef{aod}[AoD]{angle-of-departure}
\acrodef{toa}[ToA]{time-of-arrival}
\acrodef{tdoa}[TDoA]{time-difference-of-arrival}
\acrodef{ris}[RIS]{reconfigurable intelligent surface}
\acrodefplural{ris}[RISs]{reconfigurable intelligent surfaces}
\acrodef{tx}[Tx]{transmitter}
\acrodef{rx}[Rx]{receiver}
\acrodefplural{rx}[Rxs]{receivers}
\acrodef{crb}[CRB]{Cram\'er-Rao lower bounds}
\acrodef{rss}[RSS]{received signal strength}
\acrodef{los}[LOS]{line-of-sight}
\acrodef{nlos}[NLOS]{non line-of-sight}
\acrodef{dft}[DFT]{discrete Fourier transform}
\acrodef{fft}[FFT]{fast Fourier transform}
\acrodef{fim}[FIM]{Fisher information matrix}
\acrodef{upa}[UPA]{uniform planar array}
\acrodefplural{upa}[UPAs]{uniform planar arrays}
\acrodef{peb}[PEB]{position error bound}
\acrodef{snr}[SNR]{signal-to-noise ratio}
\acrodef{sre}[SRE]{smart radio environment}
\acrodefplural{sre}[SRE]{smart radio environments}
\acrodef{mimo}[MIMO]{multiple-input  multiple-output}
\acrodef{rfid}[RFID]{radio-frequency identification}
\acrodef{siso}[SISO]{single-input single-output}
\acrodef{miso}[MISO]{multiple-input single-output}
\acrodef{ici}[ICI]{inter-carrier interference}
\acrodef{iid}[iid]{independent and identically distributed}
\acrodef{ml}[ML]{maximum likelihood}
\acrodef{cdf}[CDF]{cumulative distribution function}
\acrodef{ofdm}[OFDM]{orthogonal frequency-division multiplexing}
\acrodef{qos}[QoS]{Quality of Service}
\acrodef{ula}[ULA]{uniform linear array}

\newcommand{\Es}{E_{\mathrm{s}}}

\newcommand{\George}[1]{\textcolor{red}{\footnotesize{\textsf {[George: #1]}}}}

\bibliographystyle{IEEEtran}
\bstctlcite{IEEEexample:BSTcontrol}
\begin{document}
\title{RIS-Enabled Self-Localization: Leveraging Controllable Reflections With Zero Access Points}

\author{\IEEEauthorblockN{Kamran Keykhosravi\IEEEauthorrefmark{1}, Gonzalo Seco-Granados\IEEEauthorrefmark{2},  George C. Alexandropoulos\IEEEauthorrefmark{3}, and Henk Wymeersch\IEEEauthorrefmark{1}}\\
\IEEEauthorblockA{\IEEEauthorrefmark{1} Department of Electrical Engineering, Chalmers University of Technology, Sweden \\
\IEEEauthorrefmark{2}  Department of Telecommunications and
Systems Engineering, Universitat Auton\`{o}ma de Barcelona, Spain \\
\IEEEauthorrefmark{3}Department of Informatics and Telecommunications,
National and Kapodistrian University of Athens, Greece\\
emails: \{kamrank, henkw\}@chalmers.se, gonzalo.seco@uab.es, alexandg@di.uoa.gr}}
\maketitle

\begin{abstract}
Reconfigurable intelligent surfaces (RISs) are one of the most promising technological enablers of the next (6th) generation of wireless systems. In this paper, we introduce a novel use-case of the RIS technology in radio localization, which is enabling the user to estimate its own position via  transmitting orthogonal frequency-division multiplexing  (OFDM) pilots and processing the signal reflected from the RIS. We demonstrate that user localization in this scenario is possible by deriving  Cram\'er-Rao lower bounds  on the positioning error and devising a low-complexity position estimation algorithm. We consider random and directional RIS phase profiles and apply a specific temporal coding to them, such that the reflected signal from the RIS can be separated from the uncontrolled multipath.
Finally, we assess the performance of our position estimator for an example system, and show that the proposed algorithm can attain the derived bound at high signal-to-noise ratio values. 
\end{abstract}

\begin{IEEEkeywords}
Radio localization, reconfigurable intelligent surface, maximum likelihood estimation, radar.
\end{IEEEkeywords}
\IEEEpeerreviewmaketitle

\section{Introduction}
Reconfigurable intelligent surfaces (RISs) are expected to revolutionize  wireless systems by enabling smart radio environments, where the propagation channel can also be programmed to improve the \ac{qos} \cite{huang2019reconfigurable,wu2019towards}. 
In general, an RIS can be modeled as a planar array of sub-wavelength unit cells, each of which can scatter the impinging signal after modulating its phase in a controlled fashion \cite{alexandg_2021}. The RIS phase profile can then be designed for optimal beamforming, localization, or interference management. Recently, a large body of research has been devoted to study the modeling, control, phase-profile design, and potential use-cases of RISs for both communications and radio localization \cite{rise6g}.
With radio localization (which is the main topic of this paper) RISs can enable or boost the accuracy of \ac{ue} positioning by providing: \emph{i)}  a strong reflected signal path towards the \ac{ue}, and \emph{ii)} a reference position \cite{henk_radio}. Being cost- and energy-efficient, RISs have the potential to boost/enable radio localization in a wide variety of situations, specifically in those where GPS signal is unavailable or weak, e.g., in city canyons, indoor environments, and tunnels. 

Many studies have been conducted to investigate RIS-aided localization in different scenarios, which can be categorized in terms of the operating regime (near-field \cite{dardari_spawk,sha_18,abu2021near,rahal2021ris}, far-field \cite{fascista2021ris,keykhosravi2020siso,keykhosravi2021semi}), RIS placement (at the \ac{bs} side \cite{sha_18,abu2021near}, at the UE side \cite{keykhosravi2021semi}, or as a separate reflector \cite{habo_rss, elzanaty_TSP}), wireless settings (\ac{mimo}\cite{elzanaty_TSP,Yiming_ICC21}, \ac{miso} \cite{fascista2021ris}, and \ac{siso} \cite{keykhosravi2020siso,rahal2021ris}), etc. Furthermore, RISs can be used to assist the radar systems to improve the target detection capabilities \cite{radarStefano,lu2021intelligent}. In \cite{sha_18}, \ac{crb} on the localization error were established for the near-field of a continuous RIS and the effects of the RIS size were studied. For the same scenario, the effects of limited RIS phase resolution was investigated in \cite{cramer_juan}. In \cite{abu2021near}, localization with an RIS acting as a lens has been considered and an estimation algorithm was proposed. Near-field localization with one BS and an RIS has been studied in \cite{dardari_spawk , rahal2021ris}, where it is shown that the wavefront curvature can be used to localize the UE, even if the direct path from the BS to \ac{ue} is blocked. %
Specifically, the authors in \cite{dardari_spawk} illustrated that by using a large stripe-like RIS, positioning can be performed even if the RIS is severely obstructed. For a generic \ac{mimo} setting equipped with an RIS, the \acp{crb} have been developed on  location and orientation errors  in \cite{elzanaty_TSP}. In \cite{Yiming_ICC21}, the authors considered an RIS-aided \ac{mimo} scenario, derived the \acp{crb}, and used the bounds to optimize the RIS phase profile for localization.
Furthermore, it has been shown that localization and \ac{ue} synchronization can be performed in a \ac{miso} \cite{fascista2021ris} and even in a \ac{siso} \cite{keykhosravi2020siso} setup with the help of a single RIS, when far-field conditions hold. 

\begin{figure}
    \centering
    \begin{tikzpicture}
    \node (image) [anchor=south west]{\includegraphics[width=5cm]{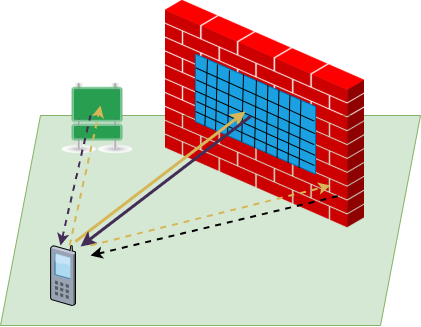}};
    \gettikzxy{(image.north east)}{\ix}{\iy};
    \node at (.12*\ix,.3*\iy){\footnotesize $\bm{p}_{\mathrm{u}}$};
    \node at (.1*\ix,.07*\iy){\footnotesize UE};
    \node at (.25*\ix,.78*\iy){\footnotesize Scatterer};
    \node at (.6*\ix,.65*\iy){\footnotesize \color{white} $\bm{p}_{\mathrm{r}}$};
    \node[rotate=-20,fill=red] at (.6*\ix,.8*\iy){\footnotesize \color{white} RIS};
    \end{tikzpicture}
    \caption{The considered system setup with a single-antenna UE and an RIS. The wireless environment is assumed to also include passive objects contributing to multipath signal reflection.}
    \label{fig:setup}
\end{figure}

In this paper, we present a novel use-case for RIS-enabled localization, where an RIS is used to reflect the signal transmitted from the \ac{ue} back to the \ac{ue} itself, and there are no access points or BSs present (see Fig.\,\ref{fig:setup}). We establish the channel model for \ac{ofdm} transmission by taking into account the undesired multipath from the surrounding environment. This multipath is then removed via a specific RIS profile design and a postprocessing step at the \ac{ue} side. Next, we estimate the user position by first obtaining a coarse estimate and then refining it to maximize the likelihood function. Finally, we evaluate our estimation method and compare its accuracy to our derived \acp{crb}. It is shown that the proposed estimator attains the bounds at high \acp{snr}.  

\paragraph*{Notations} Vectors are shown by bold lower-case letters and matrices by bold upper-case ones. We use $[\bm{a}]_i$ to indicate the $i$th element of the vector $\bm{a}$ and $[\bm{A}]_{i,j}$ to denote the element at the $i$th row and the $j$th column of matrix $\bm{A}$. Furthermore, the subindex $i:j$ is used to specify all the elements between $i$ and $j$. All vectors are column vectors by default. Transpose, Hermitian, and complex-conjugate operations are denoted by $(\cdot)^{\top}$, $(\cdot)^{\mathrm{H}}$, and $(\cdot)^{*}$, respectively. The Hadamard product is indicated by $\odot$.

\section{System and channel model}
We consider a  single-antenna full-duplex UE with unknown location $\bm{p}_{\mathrm{u}} \in \mathbb{R}^3$ and a single RIS with known center location $\bm{p}_{\mathrm{r}}\in \mathbb{R}^3$ and orientation $\bm{R}\in \text{SO}(3)$. We assume that the RIS is handled by a dedicated controller with whom the UE can communicate for establishing synchronization \cite{rise6g}. The UE location relative to the RIS is defined as $\bm{p}_{\mathrm{ur}}=\bm{p}_{\mathrm{u}}-\bm{p}_{\mathrm{r}}$. The RIS has $M$ elements in a square configuration. We indicate the location of the $m$th RIS element by $\bm{p}_{\mathrm{r},m}$. The UE transmits $T$ \ac{ofdm} signals with $N$ subcarriers and receives the backscattered signals from the RIS and from some other scatterers in the environment. We assume that the UE-RIS (or RIS-UE) channel has negligible \ac{nlos} components. Furthermore, we assume that all the transmitted pilot symbols are equal to $\sqrt{\Es}$, with $\Es$ being the symbol energy. Then, the received signal at the UE at the discrete time $t=1,2,\dots, T$ can be modeled as follows:
\begin{align}\label{eq:yt}
    \bm{y}_t&\triangleq\sqrt{\Es}\beta_0 \bm{d}(\tau_0) \bm{a}(\bm{p}_{\mathrm{ur}})^\top\bm{\Omega}_t\bm{a}(\bm{p}_{\mathrm{ur}})\nonumber\\
    & \quad \quad + \sqrt{\Es}\sum_{l=1}^L \beta_l \bm{d}(\tau_l) +\bm{n}_t.
\end{align}
Here, $\tau_0\triangleq2\Vert \bm{p}_{\mathrm{ur}} \Vert/c$ is the transmission delay, where $c$ is the speed of light. The delay steering vector is defined as
\begin{align}
\bm{d}(\tau) \triangleq [1, e^{-\jmath 2\pi \tau \Delta f }, \dots, e^{-\jmath 2\pi \tau (N-1)\Delta f }]^{\top},    
\end{align}
where $\Delta f$ is the subcarrier spacing. The complex channel gain is denoted by $\beta_0\in\mathbb{C}$.
The vector $\bm{a}(\bm{p}_{\mathrm{ur}})\in \mathbb{C}^{M}$ indicates the RIS response, having the elements for $m=1,2,\ldots,M$:
\begin{align}\label{eq:aVec}
    [\bm{a}(\bm{p}_{\mathrm{ur}})]_m=\exp \left(\jmath \frac{2\pi}{\lambda} (  \Vert\bm{p}_{\mathrm{u}}-\bm{p}_{r}\Vert-\Vert\bm{p}_{\mathrm{u}}-\bm{p}_{\mathrm{r},m}\Vert) \right), 
\end{align}
 where $\lambda \triangleq c/f_{\mathrm{c}}$ is the wavelength with $f_{\mathrm{c}}$ being the carrier frequency. Furthermore, we have $\bm{\Omega}_t \triangleq \mathrm{diag}(\bm{\omega}_t)$, where $\bm{\omega}_t \in \mathbb{C}^{M}$ is the RIS phase profile, i.e., for each $m$, the amplitude of $[\bm{\omega}_t]_m$ is one and its phase indicates the phase shift applied by the $m$th element of RIS to the impinging signal. The noise term $\bm{n}_t$ is assumed to be spatially and temporally white with covariance $\sigma_{\mathrm{n}}^2\bm{I}_N$. The number of uncontrolled multipath components is denoted by $L$. The path loss and the delay for the $\ell$th path are denoted by $\beta_{\ell}$ and $\tau_{\ell}$, respectively.  By using the following definition: 
 \begin{align}\label{eq:bDef}
     \bm{b}(\bm{p}_{\mathrm{ur}}) \triangleq \bm{a}(\bm{p}_{\mathrm{ur}}) \odot \bm{a}(\bm{p}_{\mathrm{ur}}),
 \end{align}
 we can rewrite \eqref{eq:yt} in the more compact form:
 \begin{align}\label{eq:ytCompact}
    \bm{y}_t&=\sqrt{\Es}\beta_0 \bm{d}(\tau_0) \bm{b}(\bm{p}_{\mathrm{ur}})^{\top} \bm{\omega}_t + \sqrt{\Es}\sum_{l=1}^L \beta_l \bm{d}(\tau_l) +\bm{n}_t.
\end{align}
We assume that the UE has the knowledge of the RIS location and orientation and also the RIS phase profiles $\bm{\Omega}_t$, which change over time; this information can be communicated by the RIS controller. This controller also coordinates the start of the pilot transmission with the UE. Using these system control settings, the UE's goal is to estimate its position via its received signal $\bm{y}_t$. We next establish the requirement for the RIS element spacing in order to avoid ambiguities in the UE position estimation.

\begin{rem}\label{remark:lambda}
To avoid grating lobes, which causes ambiguity in the position estimation at the far field, the RIS element spacing should be at least $\lambda/4$. To show this, we rewrite the far-field approximation of the vector $\bm{b}(\bm{p}_{\mathrm{u}})$ in \eqref{eq:bDef} for a two-dimensional (2D) case with an $1\times M$ \ac{ula} RIS, as $[\bm{b}_{\mathrm{ff}}(\theta)]_m=\exp (\jmath \frac{4\pi}{\lambda}  m d \sin{\theta} ),$ where $\theta$ is the angle between $\bm{p}_{\mathrm{ur}}$ and the RIS normal. Since we assume that the UE is placed in the front side of the RIS, we have that $\theta=[-\pi/2, \pi/2]$, and hence, $\sin(\theta)=[-1, 1]$. Therefore, the phase difference between two consecutive entries of the vector $\bm{b}_{\mathrm{ff}}(\theta)$ is between $[-\jmath 4\pi d/\lambda \ \jmath 4\pi  d/\lambda]$. Consequently, it must hold $8\pi d/\lambda\leq 2\pi$, resulting in $d\leq\lambda/4$, to be able to find $\theta$ without ambiguity.
\end{rem}

\section{RIS phase profile design}\label{sec:risPhaseDesign}
In this section, we study the design of the RIS phase profile $\bm{\omega}_t$. First, we describe how to design $\bm{\omega}_t$ to remove the uncontrolled multipath at the UE's receiver, and then, we introduce the random and directional RIS phase profiles, which are used in simulation results in Section~\ref{sec:simulationResults}.

\subsection{Multipath Removal}\label{sec:removingMultipath}
To remove the multipath from the received signal in \eqref{eq:ytCompact},  using the method presented in \cite{keykhosraviMulti}. First, we design the RIS phase profiles for half of the transmission time and denote them by $\tilde{\bm{\omega}}_{\tilde{t}} \in \mathbb{C}^{M}$, where $\tilde{t}=1,2,\dots, T/2$ ($T$ is assumed even). The vectors $\tilde{\bm{\omega}}_{\tilde{t}}$ can be chosen based on a random codebook (explained later on in Section\,\ref{sec:randomCodeBook}), a directional one (explained in Section\,\ref{sec:directionalCodeBook}), or any other strategy. Then, we let $\bm{\omega}_{2\tilde{t}-1} =  \tilde{\bm{\omega}}_{\tilde{t}}$ and $\bm{\omega}_{2\tilde{t}} = -  \tilde{\bm{\omega}}_{\tilde{t}}$. Consequently, at the UE receiver, we compute the following quantity:
\begin{align}
\tilde{\bm{y}}_{\tilde{t}} &= \frac{1}{2}(\bm{y}_{2\tilde{t}-1}-\bm{y}_{2\tilde{t}})\\
&=\sqrt{\Es}\beta_0 \bm{d}(\tau_0) \bm{b}(\bm{p}_{\mathrm{ur}})^{\top} \tilde{\bm{\omega}}_{\tilde{t}}  +\tilde{\bm{n}}_t,\label{eq:yTilde}
\end{align}
where \eqref{eq:yTilde} follows from \eqref{eq:ytCompact} and $\tilde{\bm{n}}_t$ is a white complex Gaussian noise with covariance matrix $\sigma_{\mathrm{n}}^2/2\bm{I}_N$. It can be seen that the multipath contribution has been disappeared in \eqref{eq:yTilde}.  

\subsection{Random Codebook}\label{sec:randomCodeBook}
The random codebook is used when there exists no prior knowledge of the UE location.
With this codebook, we have $[\tilde{\bm{\omega}}_t]_m=\exp(\jmath  \theta_{m,t})$, where $\theta_{m,t}$ $\forall$$m,t$ are chosen randomly according to the uniform distribution over the interval $[0,2\pi)$, each one independently from each other.

\subsection{Directional Codebook}\label{sec:directionalCodeBook}
We use the directional codebook when there exists a prior knowledge about the UE position. We assume that an approximate location $\bm{q}_{\mathrm{u}}$ is available, which is distributed uniformly within the sphere $\mathcal{S}(\bm{p}_{\mathrm{u}},\delta)$, with $\delta$ being the uncertainty radius, and for any $\bm{c}\in\mathbb{R}^3$ and $r\in \mathbb{R}^+$ holds:
\begin{align}
    \mathcal{S}(\bm{c},r) = \{\bm{x}\in\mathbb{R}^3 \big\vert  \Vert\bm{c}-\bm{x} \Vert \leq r\}.
\end{align}

To generate a directional codebook, we first select $T/2$ points $\bm{q}_{\mathrm{u}, \tilde{t}}\in \mathbb{R}^3$, with $\tilde{t}=1,2,\dots, T/2$, randomly and uniformly within the sphere $\mathcal{S}(\bm{q}_{\mathrm{u}},\delta)$. Then, we design the RIS phase profiles as:
  \begin{align}\label{eq:directionalPhase}
      \tilde{\bm{\omega}}_{\tilde{t}} = \bm{b}(\bm{q}_{\mathrm{u},\tilde{t}}-\bm{p}_{\mathrm{r}})^*.
  \end{align}
It can be seen that by assigning the RIS phase profile according to the latter expression, the reflected signal power is directed towards the points $\bm{q}_{\mathrm{u},\tilde{t}}$.

\section{Fisher Information Analysis}
In this section, we present an analytical lower bound on the estimation error based on the \ac{fim} analysis. For all unbiased estimators, the estimation error is lower-bounded by the \ac{peb}, that is
\begin{align}\label{eq:peb}
    \mathrm{PEB} \triangleq \sqrt{\mathrm{tr}\left([\bm{J}(\bm{\eta})^{-1}]_{3:5,3:5}\right)}.
\end{align}
Here, $\bm{\eta}\in\mathbb{R}^{5+3L}$ is the vector of unknowns that is 
\begin{align}
    \bm{\eta}\triangleq[\rho_0, \varphi_0, \bm{p}_{\mathrm{u}}^\top, \rho_1, \varphi_1, \tau_1,\dots,\rho_L, \varphi_L, \tau_L]^\top,
\end{align}
 where $\rho_l = |\beta_l|$ and $\varphi_l=\angle\beta_l$ for $l=0,1,\dots,L$. Furthermore, $\bm{J}(\bm{\eta})$ is the \ac{fim}, which is defined as \cite{kay1993fundamentals}
\begin{align}\label{eq:FIM}
    \bm{J}(\bm{\eta})\triangleq\frac{2\Es}{\sigma_{\mathrm{n}}^2}\Re \left\{\mathlarger{\mathlarger{\sum}}_{t=1}^T\left(\frac{\partial \bm{\mu}_t}{\partial\bm{\eta}}\right)^{\text{H}} \frac{\partial\bm{\mu}_t}{\partial\bm{\eta}}\right\},
\end{align}
where $\bm{\mu}_t\triangleq\beta_0 \bm{d}(\tau_0) \bm{b}(\bm{p}_{\mathrm{ur}})^{\top} \bm{\omega}_t+ \sum_{l=1}^L\beta_l \bm{d}(\tau_l)$. Next, we study the structure of the \ac{fim} when the temporal sum of the utilized RIS phase profiles (i.e., $\sum_t\bm{\omega}_t$) becomes the all-zero vector. This includes (but is not restricted to) the codebooks discussed in Section~\ref{sec:risPhaseDesign}.  
\begin{proposition}\label{prob:prob1}
In the FIM expression given by \eqref{eq:FIM}, any element of the form:
\begin{align}
    \sum_{t=1}^T\left(\frac{\partial \bm{\mu}_t}{\partial{\eta}_{\text{LOS}}}\right)^{\text{H}} \frac{\partial\bm{\mu}_t}{\partial{\eta}_{\text{NLOS}}}
\end{align}
for any $\eta_{\text{LOS}} \in \bm{\eta}_{\text{LOS}}\triangleq[ \rho_0,\varphi_0,\bm{p}_{\mathrm{u}}^\top]^\top$ and any 
${\eta}_{\text{NLOS}} \in [\rho_1, \varphi_1, \tau_1,\dots,\rho_L, \varphi_L, \tau_L]^\top$ evaluates to zero, provided that  $\sum_t \bm{\omega}_t=\bm{0}$ holds.
\end{proposition}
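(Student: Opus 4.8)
The plan is to exploit the asymmetric way in which the LOS and NLOS parts of $\bm{\mu}_t$ depend on the phase profile $\bm{\omega}_t$. First I would split the noiseless mean as $\bm{\mu}_t = \bm{\mu}_t^{\mathrm{L}} + \bm{\mu}^{\mathrm{N}}$, with $\bm{\mu}_t^{\mathrm{L}} \triangleq \beta_0 \bm{d}(\tau_0)\,\bm{b}(\bm{p}_{\mathrm{ur}})^{\top}\bm{\omega}_t$ and $\bm{\mu}^{\mathrm{N}} \triangleq \sum_{l=1}^{L}\beta_l \bm{d}(\tau_l)$. Because $\bm{\mu}^{\mathrm{N}}$ contains none of the parameters in $\bm{\eta}_{\mathrm{LOS}}$ and $\bm{\mu}_t^{\mathrm{L}}$ contains none of the NLOS parameters, the two derivatives in the cross term reduce to $\partial\bm{\mu}_t/\partial\eta_{\mathrm{LOS}} = \partial\bm{\mu}_t^{\mathrm{L}}/\partial\eta_{\mathrm{LOS}}$ and $\partial\bm{\mu}_t/\partial\eta_{\mathrm{NLOS}} = \partial\bm{\mu}^{\mathrm{N}}/\partial\eta_{\mathrm{NLOS}}$.

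The crux is the following structural observation. Since $\bm{\mu}^{\mathrm{N}}$ does not involve $\bm{\omega}_t$, the vector $\bm{h} \triangleq \partial\bm{\mu}^{\mathrm{N}}/\partial\eta_{\mathrm{NLOS}}$ is independent of the time index $t$. On the LOS side, $\bm{\mu}_t^{\mathrm{L}}$ is \emph{linear} in $\bm{\omega}_t$, so every LOS derivative retains this linearity and can be written as $\partial\bm{\mu}_t^{\mathrm{L}}/\partial\eta_{\mathrm{LOS}} = \bm{A}_{\eta_{\mathrm{LOS}}}\,\bm{\omega}_t$ for a matrix $\bm{A}_{\eta_{\mathrm{LOS}}}\in\mathbb{C}^{N\times M}$ that does not depend on $t$. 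I would verify this factorization case by case: it is immediate for $\rho_0$ and $\varphi_0$, where differentiating $\beta_0 = \rho_0 e^{\jmath\varphi_0}$ leaves the factor $\bm{d}(\tau_0)\bm{b}(\bm{p}_{\mathrm{ur}})^{\top}$ intact. For the three entries of $\bm{p}_{\mathrm{u}}$ one differentiates through both $\tau_0 = 2\Vert\bm{p}_{\mathrm{ur}}\Vert/c$ and $\bm{b}(\bm{p}_{\mathrm{ur}})$ via the product rule; the two resulting terms are each of the form ($t$-independent vector)$\times(\bm{c}^{\top}\bm{\omega}_t)$ and therefore still assemble into a single $t$-independent matrix acting on $\bm{\omega}_t$.

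With these two facts, the summation collapses at once:
\begin{align}
\sum_{t=1}^{T}\left(\frac{\partial\bm{\mu}_t^{\mathrm{L}}}{\partial\eta_{\mathrm{LOS}}}\right)^{\mathrm{H}}\bm{h} = \sum_{t=1}^{T}\bm{\omega}_t^{\mathrm{H}}\bm{A}_{\eta_{\mathrm{LOS}}}^{\mathrm{H}}\bm{h} = \left(\sum_{t=1}^{T}\bm{\omega}_t\right)^{\mathrm{H}}\bm{A}_{\eta_{\mathrm{LOS}}}^{\mathrm{H}}\bm{h},
\end{align}
where the $t$-independent factors $\bm{A}_{\eta_{\mathrm{LOS}}}^{\mathrm{H}}$ and $\bm{h}$ are pulled outside the sum. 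Invoking the hypothesis $\sum_t\bm{\omega}_t=\bm{0}$ annihilates this quantity exactly (so that taking $\Re\{\cdot\}$ is not even needed), which is the claim.

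I expect the only genuine subtlety to be verifying that the position derivative keeps the factored linear form $\bm{A}_{\eta_{\mathrm{LOS}}}\bm{\omega}_t$ despite the two chain-rule contributions through $\tau_0$ and $\bm{b}(\bm{p}_{\mathrm{ur}})$; once that bookkeeping is dispatched, the result follows purely from the linearity of $\bm{\mu}_t^{\mathrm{L}}$ in $\bm{\omega}_t$, the $t$-independence of the NLOS derivative, and the zero-sum condition on the phase profiles.
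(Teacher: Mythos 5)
Your proof is correct and takes essentially the same route as the paper's: the paper's (much terser) argument likewise observes that $\partial\bm{\mu}_t/\partial\eta_{\text{LOS}}$ is linear in $\bm{\omega}_t$ while $\partial\bm{\mu}_t/\partial\eta_{\text{NLOS}}$ is independent of $\bm{\omega}_t$, so the condition $\sum_t\bm{\omega}_t=\bm{0}$ annihilates the cross terms. Your write-up simply makes explicit the factorization $\bm{A}_{\eta_{\text{LOS}}}\bm{\omega}_t$ (including the product-rule bookkeeping for the position derivative) that the paper leaves implicit.
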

\begin{proof}
Any term of the form ${\partial \bm{\mu}_t}/{\partial{\eta}_{\text{LOS}}}$ depends linearly on $\bm{\omega}_t$, while any term ${\partial\bm{\mu}_t}/{\partial{\eta}_{\text{NLOS}}}$ is independent of $\bm{\omega}_t$. Since $\sum_t \bm{\omega}_t=\bm{0}$, the corresponding entry in the FIM will evaluate to zero.
\end{proof}

Based on Proposition\,\ref{prob:prob1}, the FIM has the following structure:
\begin{align}
    \bm{J}(\bm{\eta})=\begin{bmatrix}
\bm{J}(\bm{\eta}_{\text{LOS}}) & \bm{0} \\
\bm{0} &  \bm{J}(\bm{\eta}_{\text{NLOS}})
\end{bmatrix},
\end{align}
where $\bm{0}$ indicates the all-zero matrix and $\bm{J}(\bm{\eta}_{\text{LOS}})$ and $\bm{J}(\bm{\eta}_{\text{NLOS}})$ are defined similar to \eqref{eq:FIM}. Using the Schur's complement method, we have that \begin{align}
    [\bm{J}(\bm{\eta})^{-1}]_{1:5,1:5} = \bm{J}(\bm{\eta}_{\text{LOS}})^{-1}.
\end{align}
Hence, we can limit our discussion to the LOS path to be able to calculate the the \ac{peb} in \eqref{eq:peb}. 
The partial derivatives needed in (\ref{eq:FIM}) can be expressed as
\begin{align}
\frac{\partial\bm{\mu}_t}{\partial\rho_0}&=e^{j\varphi_0} \bm{b}(\bm{p}_{\mathrm{ur}})^\top\bm{\omega}_t\, \bm{d}(\tau_0),\\
\frac{\partial\bm{\mu}_t}{\partial\varphi_0}&=j\beta_0 \bm{b}(\bm{p}_{\mathrm{ur}})^\top\bm{\omega}_t\, \bm{d}(\tau_0),\\
\frac{\partial\bm{\mu}_t}{\partial\bm{p}_{\mathrm{u}}}&=\frac{2\beta_0 \bm{b}(\bm{p}_{\mathrm{ur}})^\top\bm{\omega}_t}{c}\dot{\bm{d}}(\tau_0)\,\bm{u}_{\mathrm{ur}}^\top  + \beta_0 \bm{d}(\tau_0) \bm{\omega}^\top_t \dot{\bm{B}}(\bm{p}_{\mathrm{ur}}),
\end{align}
where we have used the following definitions:
\begin{align}
    \dot{\bm{d}}(\tau_0)&\triangleq\frac{\partial\bm{d}(\tau_0)}{\partial\tau}=-j2\pi\Delta f\, \bm{n}\odot\bm{d}(\tau_0), \\
     \dot{\bm{B}}(\bm{p}_{\mathrm{ur}}) & \triangleq \frac{\partial\bm{b}(\bm{p}_{\mathrm{u}})}{\partial\bm{p}_{\mathrm{u}}}\\
      & =-j\frac{4\pi}{\lambda}\left(\text{diag}\left(\bm{b}(\bm{p}_{\mathrm{ur}})\right)\bm{K}^\top-\bm{b}(\bm{p}_{\mathrm{ur}})\bm{u}_{\mathrm{ur}}^\top\right)
\end{align}
with $\bm{u}_{\mathrm{ur}}\triangleq\bm{p}_{\mathrm{ur}}/\Vert \bm{p}_{\mathrm{ur}}\Vert$, $\bm{K}\triangleq[\bm{u}_0, \bm{u}_1, \dots, \bm{u}_{M-1}]$, $\bm{u}_m\triangleq(\bm{p}_{\mathrm{u}}-\bm{p}_{\mathrm{r},m})/\Vert \bm{p}_{\mathrm{u}}-\bm{p}_{\mathrm{r},m} \Vert$, and $\bm{n}\triangleq[0,1,\dots,N-1]$. 
Putting all of the above together, the FIM for the LOS path  is
\begin{align}
  \bm{J}(\bm{\eta}_{\text{LOS}})&= \frac{2\Es}{\sigma_{\mathrm{n}}^2} \begin{bmatrix}
NG & 0 & \Re\{\bm{z}^{\top}\}\\
0 & N|\beta_{0}|^{2}G & |\beta_{0}|\Im\{\bm{z}^{\top}\}\\
\Re\{\bm{z}\} & |\beta_{0}|\Im\{\bm{z}\} & \bm{H}
\end{bmatrix},
\end{align}
where we have used the definitions:
\begin{align}
G & \triangleq\sum_{t}\vert\bm{b}(\bm{p}_{\mathrm{ur}})^{\top}\bm{\omega}_{t}\vert^{2},\\
\end{align}
\begin{align}
\bm{z}^{\top}&\triangleq \frac{2 |\beta_{0}| }{c} G \bm{d}^\mathrm{H}\dot{\bm{d}} \bm{u}_{\mathrm{ur}}^{\top}  + N|\beta_{0}| \bm{b}^{H}\mathbf{C}\dot{\bm{B}}, \\
\bm{H}&\triangleq \frac{4|\beta_{0}|^{2}}{c^{2}}G\Vert\dot{\mathbf{d}}\Vert^{2}\mathbf{u}_{\mathrm{ur}}\mathbf{u}_{\mathrm{ur}}^{\top}\nonumber\\
&\quad+|\beta_{0}|^{2}N\Re\left\{ \dot{\mathbf{B}}^{H}\mathbf{C}\dot{\mathbf{B}}\right\} +\frac{2|\beta_{0}|^{2}}{c}\Re\left\{\bm{U}+\bm{U}^{\mathrm{H}}\right\}.
\end{align}
In the latter expressions, we have dropped the dependencies on $\tau_0$ and $\bm{p}_{\mathrm{ur}}$ for notation simplification. Furthermore, we used:
\begin{align}
    \mathbf{C} & \triangleq\sum_{t}(\bm{\omega}_{t}^{*}\bm{\omega}_{t}^{\top}),\\
\bm{U}&\triangleq\left(\dot{\mathbf{d}}^{H}\mathbf{d}\right)\mathbf{u}_{\mathrm{ur}}\bm{b}^{H}\mathbf{C}\dot{\mathbf{B}}.
\end{align}

By using Schur's complement, we can calculate the equivalent FIM (EFIM) of $\bm{p}_{u}$ as
\begin{align} \label{eq:EFIM}
    \bm{J}&(\bm{p}_{\mathrm{u}})= \frac{8|\beta_0|^2\Es G}{c^2\sigma_{\mathrm{n}}^2}\left(\Vert\dot{\bm{d}}\Vert^2-\frac{\vert \dot{\bm{d}}^\mathrm{H}\bm{d}\vert^2}{N}\right)\mathbf{u}_{\mathrm{ur}}\mathbf{u}_{\mathrm{ur}}^{\top} \\
    &+\frac{2\Es |\beta_0|^2 N}{\sigma_{\mathrm{n}}^2}\left(\Re\{\dot{\bm{B}}^{\mathrm{H}} F \dot{\bm{B}}\} -\frac{1}{G}\Re\{\dot{\bm{B}}^{\mathrm{H}} \bm{F} \bm{b}\bm{b}^{\mathrm{H}}\bm{F}\dot{\bm{B}}\}\right).\nonumber
\end{align}
Then, the PEB in \eqref{eq:peb} can be computed as
\begin{align}
    \mathrm{PEB} = \sqrt{\mathrm{tr}\left(\bm{J}(\bm{p}_{\mathrm{u}})^{-1}\right)}.
\end{align}

\begin{rem}\label{remark:peb}
In far-field conditions, the vector $\bm{b}(\bm{p}_{\mathrm{ur}})$ can be approximated by $\bm{b}_{\mathrm{ff}}$, having the elements:
\begin{align}\label{eq:bff}
[\bm{b}_{\mathrm{ff}}(\bm{p}_{\mathrm{ur}})]_m=\exp \left(\jmath \frac{4\pi}{\lambda} \bm{u}_{\mathrm{ur}}^\top (\bm{p}_{\mathrm{r},m}-\bm{p}_r) \right).
\end{align}
In addition, the matrix $\dot{\bm{B}}(\bm{p}_{\mathrm{ur}})$ is approximated by 
\begin{align}
    \dot{\bm{B}}_{\mathrm{ff}}(\bm{p}_{\mathrm{ur}})&=\frac{\partial \bm{b}_{\mathrm{ff}}(\bm{p}_{\mathrm{ur}})}{\partial \bm{p}_{\mathrm{ur}}}\\
    & =\jmath\frac{4\pi}{\lambda\Vert\bm{p}_{\mathrm{ur}}\Vert}\mathrm{diag}(\bm{b}_{\mathrm{ff}}(\bm{p}_{\mathrm{ur}}))\bm{M}^\top\left(\bm{I}-\bm{u}_{\mathrm{ur}}\bm{u}_{\mathrm{ur}}^\top\right),\nonumber
\end{align}
where the matrix $\bm{M}$ is given by
\begin{align}
\bm{M}=\left[\bm{p}_{\mathrm{r},0}-\bm{p}_{\mathrm{r}},\bm{p}_{\mathrm{r},1}-\bm{p}_{\mathrm{r}},\dots,\bm{p}_{\mathrm{r},M-1}-\bm{p}_{\mathrm{r}}\right],
\end{align}
and contains the relative positions of the RIS elements with respect to the surface's center. This implies that the EFIM in \eqref{eq:EFIM} consists of two components. The first one is proportional to $\bm{u}_{\mathrm{ur}}\bm{u}_{\mathrm{ur}}^\top$, and it is scaled by the signal bandwidth (due to the term $\Vert\dot{\bm{d}}(\tau_0)\Vert$) and the energy reflected by the RIS towards the UE. The second component is proportional to $\mathbf{I}-\bm{u}_{\mathrm{ur}}\bm{u}_{\mathrm{ur}}^\top$, which means that it is in the two-dimensional subspace orthogonal to the first component. Moreover, this second component decreases with the UE distance as $1/\Vert\bm{p}_{\mathrm{ur}}\Vert^2$ (without accounting here for any possible dependence of $\beta_0$ with the distance), and it is independent of the bandwidth. Note that the first component represents the contribution of the estimation of the propagation delay to the EFIM, while the second component corresponds to the contribution of the estimation of the direction of the UE (i.e., the angles of arrival and departure on/from the RIS). We next consider the case where the $\bm{u}_{\mathrm{ur}}$ is orthogonal to the RIS normal $\bm{n}_{\mathrm{r}}$. Note that since $\bm{M}^\top \bm{n}_{\mathrm{r}} = 0$, we have that  \begin{align}
    \bm{M}^\top\left(\mathbf{I}-\bm{u}_{\mathrm{ur}}\bm{u}_{\mathrm{ur}}^\top\right) = \bm{M}^\top\left(\mathbf{I}-\bm{u}_{\mathrm{ur}}\bm{u}_{\mathrm{ur}}^\top-\bm{n}_{\mathrm{r}}\bm{n}_{\mathrm{r}}^\top\right).\label{eq:nnt}
\end{align}
Therefore, if $\bm{u}_{\mathrm{ur}}^\top \bm{n}_{\mathrm{r}}=0$ holds, the term $\mathbf{I}-\bm{u}_{\mathrm{ur}}\bm{u}_{\mathrm{ur}}^\top-\bm{n}_{\mathrm{r}}\bm{n}_{\mathrm{r}}^\top$ becomes a one-dimensional subspace, which yields an ill-conditioned FIM. 
\end{rem}

\section{Low-complexity Localization}\label{sec:LowCompLocalization}
In this section, we present a low-complexity localization method for the considered RIS-aided wireless system. The proposed estimator has three steps: \emph{i)} we first obtain a coarse estimate of the delay $\tau_0$, then \emph{ii)} we compute a coarse estimate of the relative UE position $\bm{p}_{\mathrm{ur}}$, and finally \emph{iii)} we refine our estimation via maximizing the likelihood function.

\subsection{Coarse Estimation of $\tau_0$}
The delay can be estimated by performing the inverse FFT (IFFT) of each vector $\tilde{\bm{y}}_t$ and non-coherently accumulating the results. Specifically, we calculate the quantity:
\begin{align}
    \bm{y}_{\mathrm{f}}\triangleq\sum_{t=1}^{T/2}\Vert \bm{F}\tilde{\bm{y}}_t \Vert^2,
\end{align}
where $\bm{F}$ is the $N^{\prime} \times N$ IFFT matrix with elements
$
[\bm{F}]_{r,s} = \exp\left(\jmath 2\pi r s/N'\right)/\sqrt{N'}
$, where $N'$ is a design parameter. 
Next, we estimate the delay as
\begin{align}\label{eq:tauHat}
    \hat{\tau}_0 = \frac{1}{N'\Delta f}\arg\max_n\vert[\bm{y}_{\mathrm{f}}]_n\vert.
\end{align}

\subsection{Coarse Estimation of $\mathbf{p}_{\mathrm{ur}}$}\label{sec:coarsePest}
Once the estimation $\hat{\tau}_0$ is available, we can generate the following sequence:
\begin{align}
    \bm{z}\triangleq\bm{d}(\hat{\tau}_0)^{\text{H}}\left[\tilde{\bm{y}}_1, \dots, \tilde{\bm{y}}_{T/2}\right],
\end{align} 
and formulate the cost function  $P(\bm{p})\triangleq{|\bm{s}(\bm{p})\bm{z}^\text{H}|^2}/{\Vert \bm{s}(\bm{p})\Vert^2},$
where
$    \bm{s}(\bm{p})
    =\bm{b}(\bm{p})^{\top}\left[\tilde{\bm{\omega}}_1, \tilde{\bm{\omega}}_2, \dots, \tilde{\bm{\omega}}_{T/2}\right].$%
Using this cost function, we can estimate the UE position as
$\hat{\bm{p}}_{\mathrm{ur}} \triangleq \arg\max_{\bm{p}} P(\bm{p}).    $%
The evaluation of $P(\bm{p})$ is relatively simple since the vectors $\bm{s}(\bm{p})$ can be precomputed on a grid of points, which can be stored at the RIS controller and made available to the UE. Moreover, $P(\bm{p})$ does not need to be evaluated in the full three-dimensional (3D) grid of points, but only in a 2D subgrid of points that fulfill $c\hat{\tau}_0/2 - \epsilon \le \Vert \bm{p}\Vert \le c\hat{\tau}_0/2 + \epsilon $, where $\epsilon$ accounts for the range of errors in $\hat{\tau}_0$ and the density of the position grid. 

\subsection{Maximum Likelihood Position Estimation}
We use the estimated $\hat{\bm{p}}_{\mathrm{ur}}$ in Sec.\,\ref{sec:coarsePest} as a starting point to optimize the \ac{ml} function via a quasi-Newton algorithm.
Based on \eqref{eq:yTilde}, the \ac{ml} estimator can be written as
\begin{align}
    \hat{\bm{p}}_{\mathrm{ur}} &\triangleq \arg\min_{\bm{p}_{\mathrm{ur}}}\min_{\beta_0} \sum_{t=1}^{T/2}\Vert \beta_0 \bm{\zeta}_t(\bm{p}_{\mathrm{ur}})-\tilde{\bm{y}}_t\Vert^2\\
    &= \arg\min_{\bm{p}_{\mathrm{ur}}} \sum_{t=1}^{T/2}\Vert \beta_0(\bm{p}_{\mathrm{ur}}) \bm{\zeta}_t(\bm{p}_{\mathrm{ur}})-\tilde{\bm{y}}_t\Vert^2\label{eq:mlfunction},
\end{align}
where 
$    \bm{\zeta}_t(\bm{p}_{\mathrm{ur}})\triangleq \bm{d}\left({2\Vert\bm{p}_{\mathrm{ur}}\Vert}/{c}\right) \bm{b}(\bm{p}_{\mathrm{ur}})^\top\tilde{\bm{\omega}}_t,$ and $
    \beta_0(\bm{p}_{\mathrm{ur}}) \triangleq \sum_t{\bm{\zeta}_t(\bm{p}_{\mathrm{ur}})^{\mathrm{H}}\tilde{\bm{y}}_t}/({\bm{\zeta}_t(\bm{p}_{\mathrm{ur}})^{\mathrm{H}}\bm{\zeta}_t(\bm{p}_{\mathrm{ur}})}).$
Finally, the UE position can be estimated as $\hat{\bm{p}}_{\mathrm{u}}=\hat{\bm{p}}_{\mathrm{ur}}+{\bm{p}}_{\mathrm{r}}$.

\section{Numerical Results}\label{sec:simulationResults}

\begin{table}[!t]
\vspace{.1cm}
	\caption{Parameters Used in the Simulation Results.}
	\label{table:par}
	\centering
	\begin{tabular}{l l l }
		\hline
		\hline
		Parameter&Symbol& Value\\
		\hline
		Carrier frequency & $f_{\mathrm{c}}$ & $28 \ {\mathrm{GHz}}$\\
		Number of RIS elements & $M$ &$100\times 100$\\
		Light speed & $c$ & $3\times 10^8 \ \mathrm{m/s}$\\
		RIS inter-element distance & $d$ &$ \lambda/4$\\
		Number of subcarriers & $N$ & $3\, 000$\\
		Subcarrier bandwidth & $\Delta F$ & $120 \ \mathrm{kHz}$\\
		Number of transmissions & $T$ & $100$\\
		Transmit power &$N\Delta F E_{\mathrm{s}}$ & $23 \ \mathrm{dBm}$\\
		Noise power spectral density & $N_0$ & $-174 \ \mathrm{dBm/Hz}$\\
		UE's noise figure& $n_{\mathrm{f}}$ & $3 \ \mathrm{dB}$\\
		Noise variance& $\sigma^2_{\mathrm{n}}\triangleq n_{\mathrm{f}} N_0$ & $-171 \ \mathrm{dBm/Hz}$\\
		\hline
		\hline
	\end{tabular}
	\vspace{-.4cm}
\end{table}

\begin{figure*}
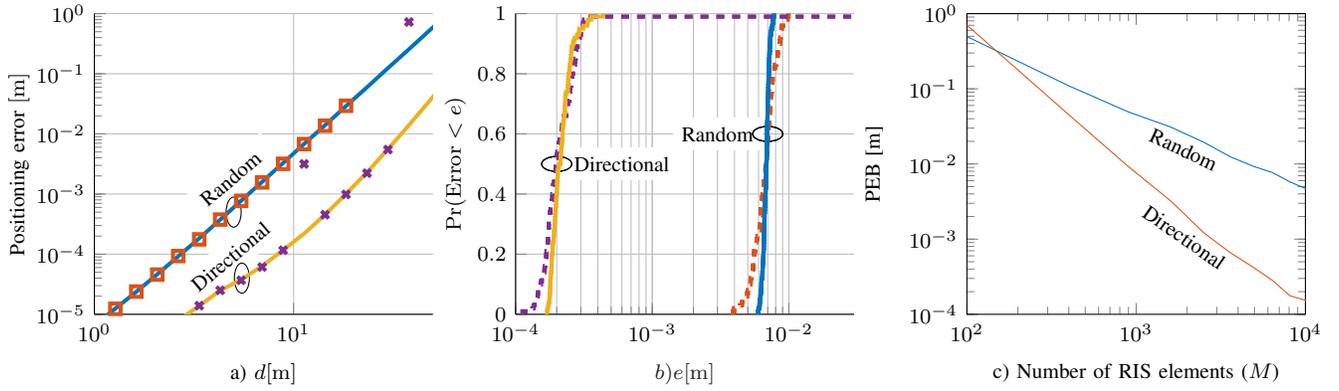

    \centering
    \definecolor{mycolor1}{rgb}{0.00000,0.44700,0.74100}%
\definecolor{mycolor2}{rgb}{0.85000,0.32500,0.09800}%
\begin{tikzpicture}[every text node part/.style={align=center}]
\pgfplotsset{every tick label/.append style={font=\footnotesize}}
\input{M}
\input{pebs}
\input{cdfs}
\node [rotate = 90] at (2.7cm,2cm){\footnotesize $\mathrm{Pr}(\mathrm{Error}<e)$};
\node [rotate = 90] at (-3.1cm,2cm){\footnotesize Positioning error $[\mathrm{m}]$};
\end{tikzpicture}
    \caption{Positioning errors for a system with a $100\times 100$ RIS at the origin. (a) Estimation error (markers) and PEBs (solid lines) for a UE located at $\bm{p}_{\mathrm{u}}=(d,d,d)/\sqrt(3)$, with random and directional RIS phase designs, (b) \ac{cdf} of the estimation error (dashed lines) and PEB (solid lines) for  $d=11 \mathrm{m}$ and $100$ realizations of RIS phase profiles, (c) PEB at $\bm{p}_{\mathrm{u}}=[10,10,10]/\sqrt{3}$ for different RIS sizes ($M$) and for directional and random codebooks.}
    \label{fig:results}
    \vspace{-.4cm}
\end{figure*}

\begin{figure}[!t]
    \centering
    \begin{tikzpicture}[every text node part/.style={align=center}]
\pgfplotsset{every tick label/.append style={font=\footnotesize}}
\begin{axis}[
at={(8.95cm,0)},
axis on top,
width=4.5cm,
height=4.5cm,
xlabel={\footnotesize $(a) \ \ x [\mathrm{m}]$},
scale only axis,
xmin=-20,
xmax=20,
ymin=0,
ymax=20,
enlargelimits=false, 
axis on top, 
]
\addplot[thick,blue] graphics[xmin=-20,ymin=0,xmax=20,ymax=20] {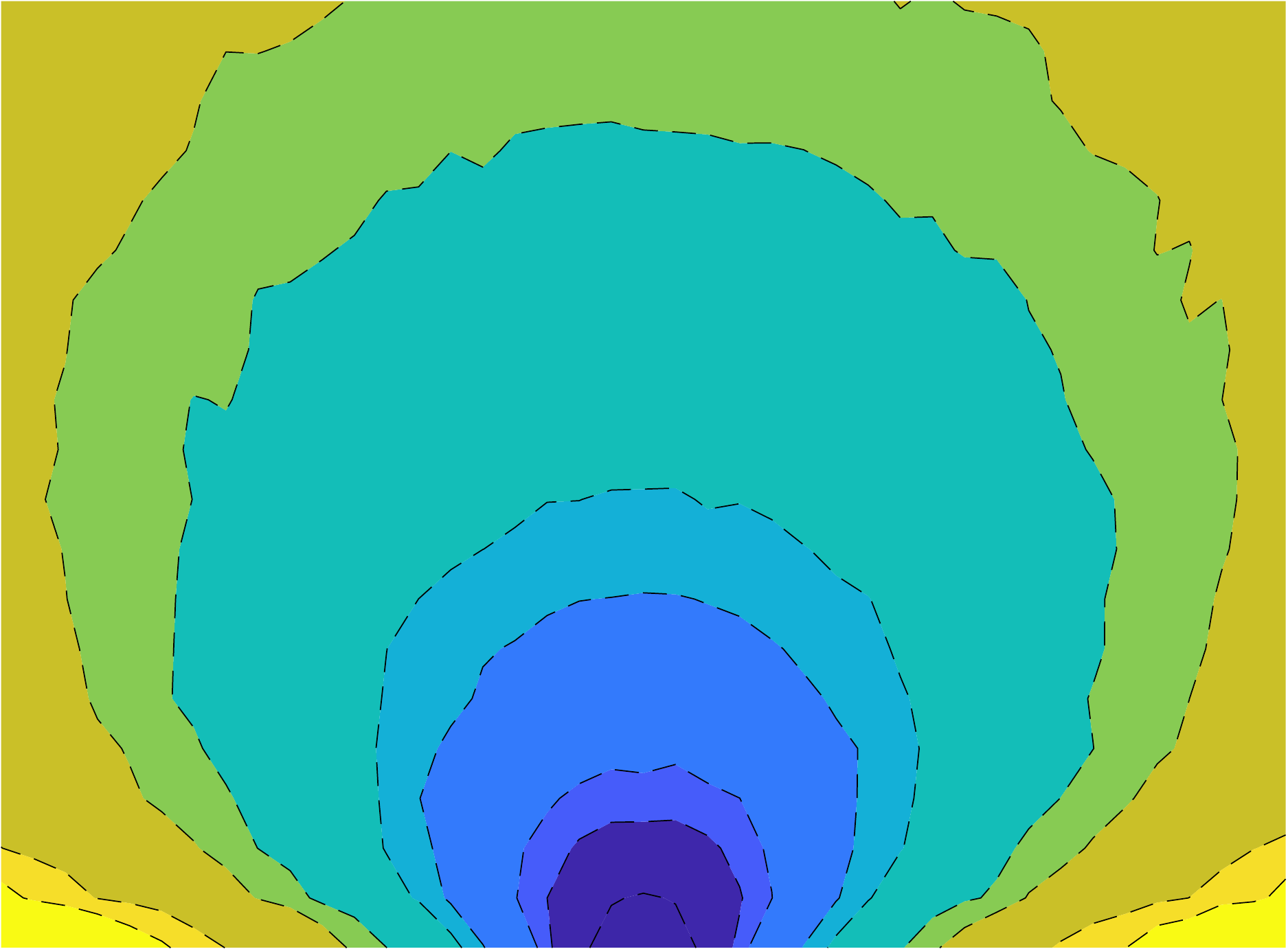};
\end{axis}
\begin{axis}[at={(13.5cm,0cm)},axis on top,
		width=.3cm,
		height=4.5cm,
		scale only axis,
		xmin=0,
		xmax=1,
		ymin=1e-4,
		ymax=1,
		ymode = log,
		xtick=\empty,
		enlargelimits=false, 
		axis on top, 
        ytick pos=right,
]
\addplot[thick,blue] graphics[xmin=0,ymin=1.04e-4,xmax=1,ymax=1] {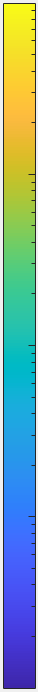};
\end{axis}
\node [rotate = 90] at (8cm,2cm){\footnotesize $y[\mathrm{m}]$};
\begin{axis}[
at={(8.95cm,-5.5cm)},
axis on top,
width=4.5cm,
height=4.5cm,
xlabel={\footnotesize $(b) \ \ x [\mathrm{m}]$},
scale only axis,
xmin=-20,
xmax=20,
ymin=0,
ymax=20,
enlargelimits=false, 
axis on top, 
]
\addplot[thick,blue] graphics[xmin=-20,ymin=0,xmax=20,ymax=20] {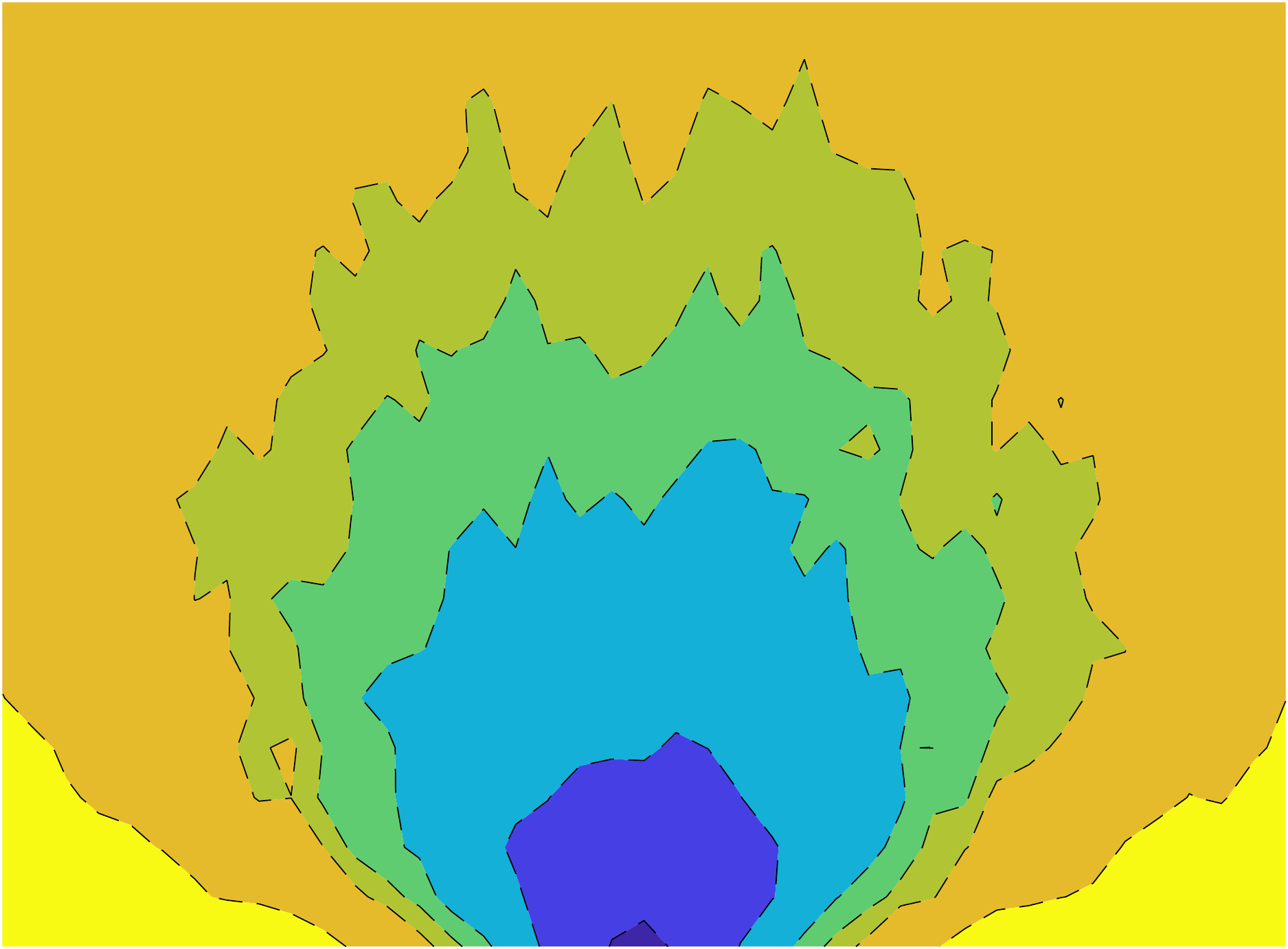};
\end{axis}
\begin{axis}[at={(13.5cm,-5.5cm)},axis on top,
		width=.3cm,
		height=4.5cm,
		scale only axis,
		xmin=0,
		xmax=1,
		ymin=7.47e-4,
		ymax=.02,
		ymode = log,
		xtick=\empty,
		enlargelimits=false, 
		axis on top, 
        ytick pos=right,
]
\addplot[thick,blue] graphics[xmin=0,ymin=7.47e-4,xmax=1,ymax=.02] {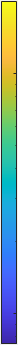};
\end{axis}
\node [rotate = 90] at (8cm,-4cm){\footnotesize $y[\mathrm{m}]$};
\end{tikzpicture}
    \caption{Contour plot of the PEB in meters for the UE position $[x,y,y]$ for a random RIS phase profile, when the path loss is set a) according to \eqref{eq:beta}, and b) equally for all the points (to the path loss at $[10,10,10]/\sqrt(3)$).  }
    \label{fig:M}
    \vspace{-.4cm}
\end{figure}

In this section, we measure the accuracy of the proposed estimation method and calculate the PEBs for an example system whose parameters are listed in Table~\ref{table:par}, where $N$ and $\Delta F$ are chosen based on the 5G NR numerology. Furthermore, we set $N'=10 N$. We assume that the RIS is located at the origin of the coordinate system with elements in the $z=0$ plane. The channel gain is calculated as \cite[Eq. (3)]{tang2020wireless}
\begin{align}\label{eq:beta}
    \beta_0 = \frac{\lambda^2 \cos(\phi)}{16\pi^{1.5} \Vert\bm{p}_{\mathrm{ur}}\Vert^2},
\end{align}
where $\phi =\cos^{-1}(\bm{u}_{\mathrm{ur}}^\top\bm{n}_{\mathrm{r}})$ indicates the angle between the RIS normal ($\bm{n}_{\mathrm{r}}$) and $\bm{p}_{\mathrm{ur}}$. The distance between adjacent RIS elements was set to $d=\lambda/4$ based on Remark\,\ref{remark:lambda}.

In Fig.\,\ref{fig:results}\,(a), we compare the positioning error of the proposed estimation method in Section\,\ref{sec:LowCompLocalization} with the PEB in \eqref{eq:peb}. The UE is located at  $[d,d,d]/\sqrt{3}$ meters, where $d\in[1,50]$. We consider the two RIS phase profile designs mentioned in Section\,\ref{sec:risPhaseDesign}, specifically, the random and directional with $\delta=1\,\mathrm{m}$. For each UE location the results are the average over $100$ realizations of RIS phase profiles, for each of which $10$ noise realizations were generated to calculate the positioning error. It can be seen that the proposed estimator can attain the theoretical lower bound for up to $d=18$\,m for the random phase design and $d=29$\,m for the directional one (with one exception point, which will be discussed later). With higher values of $d$, we can see from the PEBs that sub-meter UE localization is still possible, however, our low-complexity localization cannot achieve the bounds due to the low \ac{snr}.

In Fig.\,\ref{fig:results}\,(a), for directional profiles, there exists an exception point, which is at $d=11$\,m, where the estimation error is considerably higher than the PEB. This is due to the existence of one outlier. As mentioned in Section\,\ref{sec:directionalCodeBook}, the directional codebook is generated by directing the beam towards random points inside the uncertainty area. If none of these points are in the close vicinity of the UE, then the received \ac{snr} is low and the estimator fails to localize. The probability of this event depends on the uncertainty area and the RIS beamwidth. Nonetheless, for the parameters considered in this paper, we can see that it is low. Figure\,\ref{fig:results}\,(b) shows the cumulative distribution function (CDF) for the PEB (solid lines) and estimation errors (dashed lines) for the point at  $d=11$\,m. It can be seen that the estimation error follows the PEB closely. Furthermore, it is shown that, due to the aforementioned outlier, the \ac{cdf} of the estimation error for the directional profiles is saturated close to $1$ (at $0.99$).

In Fig.\,\ref{fig:results}\,(c), we illustrate the effects of the RIS size on the PEB at the UE location $\bm{p}_{\mathrm{u}}=[10,10,10]/\sqrt{3}$. It can be seen that, for the directional phase profile, the PEB reduces with $M$ faster than for the the random one. This is due to the fact that the SNR grows faster with $M$ when directional beamforming is used. Furthermore, with very low values of $M$, different beams in directional codebook becomes almost identical (due to the very large beamwidths), which reduces the estimation accuracy. Hence, the directional codebook performs worse than the random one at (very) low values of $M$.

In Fig.\,\ref{fig:M}\,(a), we show the PEB for the \ac{ue} position at $[x,y,y]$, where $x\in[-20,20]$ and $y\in[0,20]$. We consider one realization of  random RIS phases. It can be seen that the PEB is almost symmetrical with respect to the $y$ axis. Lower PEBs are obtained for lower values of $\theta$, (partly) because of the RIS unit-cell directional pattern, which is incorporated in \eqref{eq:beta}. As demonstrated, submeter localization is possible almost all across the considered area. In Fig.\,\ref{fig:M}\,(b), we study the PEB by setting $\beta_0$ to be the same for all the UE positions. They are set to the $\beta_0$ calculated at $[10,10,10]/\sqrt{3}$. By doing so, we can remove the effects of the SNR on PEB and focus on the geometrical effects. It can be seen that PEB increases with the UE distance and also $\theta$; this behavior was explained in Remark\,\ref{remark:peb}.

\if{0}
\begin{figure}
    \centering
    \includegraphics[width=0.9\columnwidth]{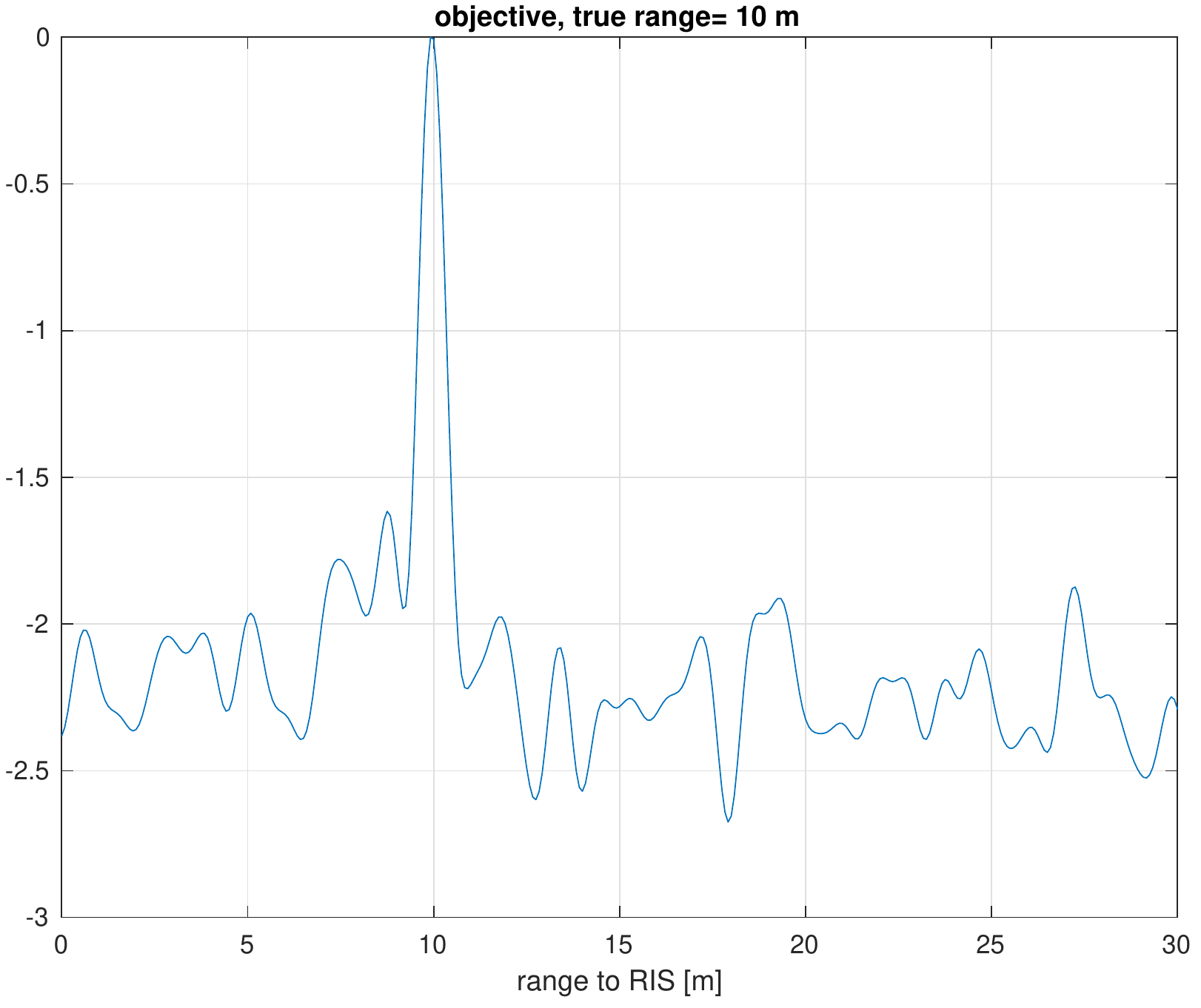}
    \caption{Example of range estimation.}
    \label{fig:rangeEstimate}
\end{figure}

\begin{figure}
    \centering
    \includegraphics[width=0.9\columnwidth]{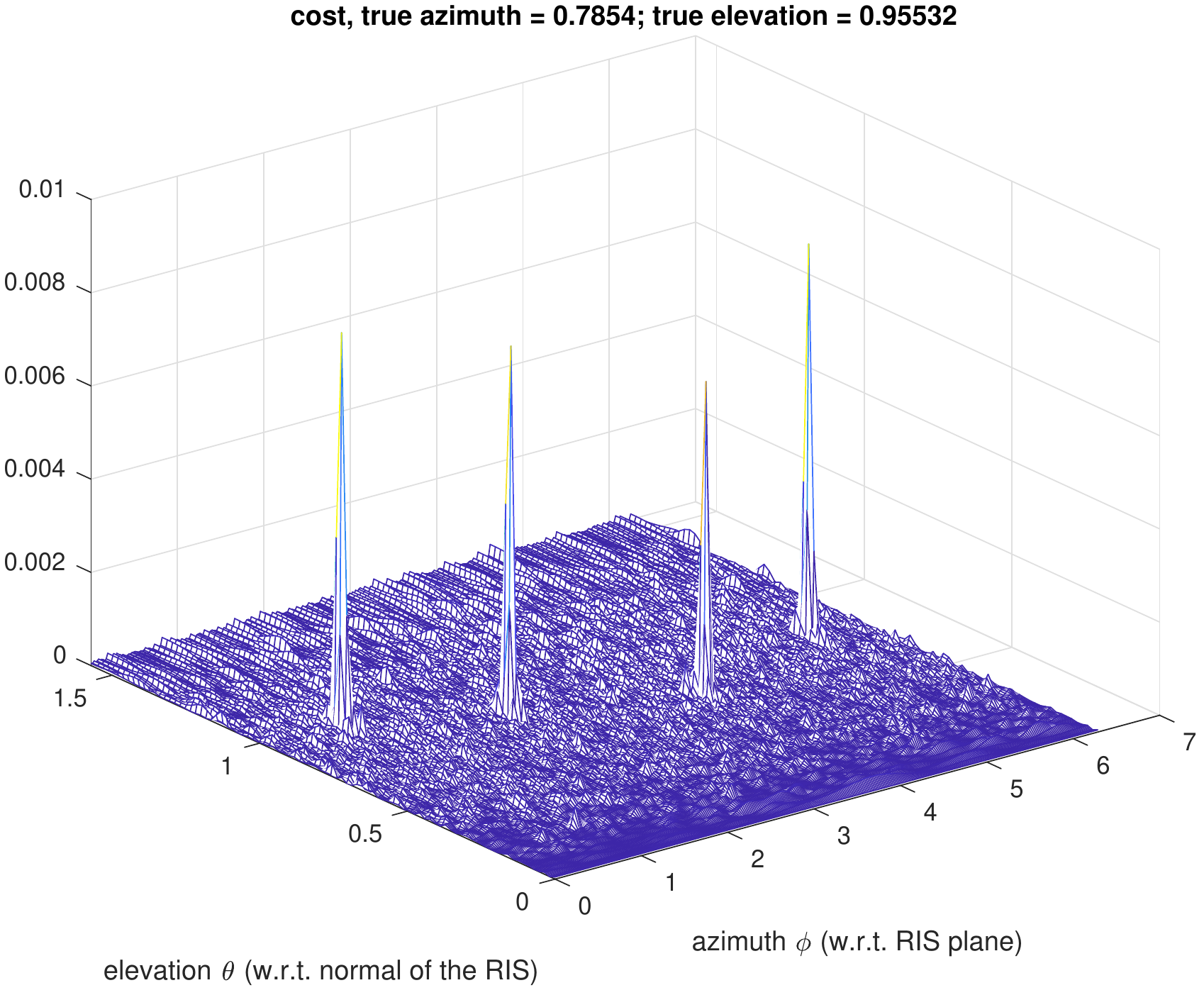}
    \caption{Example of angle estimation.}
    \label{fig:angleEstimate}
\end{figure}
\fi

\vspace{-.4cm}
\section{Conclusion}
We presented the concept of 3D UE self-localization with only one RIS, where the UE transmits multiple OFDM signals and processes the reflected signal from the RIS. The positioning was performed by \emph{i)} separating the signal reflected from the RIS from the undesired multipath, \emph{ii)} establishing a coarse estimate of the position, and \emph{iii)} refining the estimation via ML. We assessed the performance of the estimator in terms of the positioning error and compared it to an analytical lower bound. Our results provides a novel example where RISs become an enabling technology for radio localization. Note that, if in the proposed scenario, the RIS is replaced by a generic scatterer with a known location, then we could only estimate the distance $\Vert\bm{p}_{\mathrm{ur}}\Vert$ (even in the absence of multipath). This of course would not be enough to find out the location of the UE.  Future work includes the investigation of the effects of RIS impairments, NLOS components, and the UE mobility
on the proposed system model and estimation approach.


\vspace{.5cm}{\footnotesize
\paragraph*{\footnotesize Acknowledgments}
This work was supported, in part, by the Swedish Research Council under grant 2018-03701, the EU H2020 RISE-6G project under grant 101017011, the Spanish Ministry of Science and Innovation PID2020-118984GB-I00 and by the Catalan ICREA Academia Programme.}

\if{0}
\newpage
{\color{gray} Do we adopt a far-field model where vector $\bm{a}$ depends only on the direction? Or do we keep the general model and maybe particularize later?}
\George{If the treatment holds only for far-field, but it will be tested in the results for near-field too, it's fine to keep it general here. We should directly start with the far field, however, if the analysis is based solely on this.}

{\color{gray} If we consider it relevant, we could make $\bm{d}(\tau)$ a function of $t$, i.e. $\bm{d}_t(\tau)$, in order to account for the stair-like sequences of 3GPP, like PRS. } \George{it is more general this way, if it can be explicitly treated. Of course, this will depend on the actual environment. Will we focus on a specific one, say indoor and a static user?}

\balance
\bibliography{refs}

\begin{thebibliography}{10}
	\providecommand{\url}[1]{#1}
	\csname url@samestyle\endcsname
	\providecommand{\newblock}{\relax}
	\providecommand{\bibinfo}[2]{#2}
	\providecommand{\BIBentrySTDinterwordspacing}{\spaceskip=0pt\relax}
	\providecommand{\BIBentryALTinterwordstretchfactor}{4}
	\providecommand{\BIBentryALTinterwordspacing}{\spaceskip=\fontdimen2\font plus
		\BIBentryALTinterwordstretchfactor\fontdimen3\font minus
		\fontdimen4\font\relax}
	\providecommand{\BIBforeignlanguage}[2]{{%
			\expandafter\ifx\csname l@#1\endcsname\relax
			\typeout{** WARNING: IEEEtran.bst: No hyphenation pattern has been}%
			\typeout{** loaded for the language `#1'. Using the pattern for}%
			\typeout{** the default language instead.}%
			\else
			\language=\csname l@#1\endcsname
			\fi
			#2}}
	\providecommand{\BIBdecl}{\relax}
	\BIBdecl
	
	\bibitem{huang2019reconfigurable}
	C.~Huang, A.~Zappone, G.~C. Alexandropoulos, M.~Debbah, and C.~Yuen,
	``Reconfigurable intelligent surfaces for energy efficiency in wireless
	communication,'' \emph{IEEE Trans. Wireless Commun.}, vol.~18, no.~8, pp.
	4157--4170, Aug. 2019.
	
	\bibitem{wu2019towards}
	Q.~Wu and R.~Zhang, ``Towards smart and reconfigurable environment: Intelligent
	reflecting surface aided wireless network,'' \emph{IEEE Commun. Mag.},
	vol.~58, no.~1, pp. 106--112, Jan. 2020.
	
	\bibitem{alexandg_2021}
	G.~C. Alexandropoulos, N.~Shlezinger, and P.~del Hougne, ``Reconfigurable
	intelligent surfaces for rich scattering wireless communications: {R}ecent
	experiments, challenges, and opportunities,'' \emph{IEEE Commun. Mag.},
	vol.~59, no.~6, pp. 28--34, Jun. 2021.
	
	\bibitem{rise6g}
	E.~Calvanese~Strinati, G.~C. Alexandropoulos, H.~Wymeersch \emph{et~al.},
	``Reconfigurable, intelligent, and sustainable wireless environments for {6G}
	smart connectivity,'' \emph{IEEE Commun. Mag.}, to appear, 2021.
	
	\bibitem{henk_radio}
	H.~Wymeersch, J.~He, B.~Denis, A.~Clemente, and M.~Juntti, ``Radio localization
	and mapping with reconfigurable intelligent surfaces: Challenges,
	opportunities, and research directions,'' \emph{IEEE Veh.Technol. Mag.},
	vol.~15, no.~4, pp. 52--61, Dec. 2020.
	
	\bibitem{dardari_spawk}
	D.~Dardari, N.~Decarli, A.~Guerra, and F.~Guidi, ``Localization in {NLOS}
	conditions using large reconfigurable intelligent surfaces,'' in \emph{Proc.
		IEEE SPAWC}, Sep. 2021.
	
	\bibitem{sha_18}
	S.~Hu, F.~Rusek, and O.~Edfors, ``Beyond massive {MIMO}: The potential of
	positioning with large intelligent surfaces,'' \emph{IEEE Trans.\ Signal
		Processing}, vol.~66, no.~7, pp. 1761--1774, Apr. 2018.
	
	\bibitem{abu2021near}
	Z.~Abu-Shaban, K.~Keykhosravi, M.~F. Keskin, G.~C. Alexandropoulos,
	G.~Seco-Granados, and H.~Wymeersch, ``Near-field localization with a
	reconfigurable intelligent surface acting as lens,'' in \emph{Proc. IEEE
		ICC}, Montreal, Canada, Jun. 2021.
	
	\bibitem{rahal2021ris}
	M.~Rahal, B.~Denis, K.~Keykhosravi, B.~Uguen, and H.~Wymeersch, ``{RIS}-enabled
	localization continuity under near-field conditions,'' in \emph{Proc. IEEE
		SPAWC}, Jun. 2021.
	
	\bibitem{fascista2021ris}
	A.~Fascista, A.~Coluccia, H.~Wymeersch, and G.~Seco-Granados, ``{RIS}-aided
	joint localization and synchronization with a single-antenna mmwave
	receiver,'' in \emph{Proc. IEEE ICASSP}, Jun. 2021.
	
	\bibitem{keykhosravi2020siso}
	K.~Keykhosravi, M.~F. Keskin, G.~Seco-Granados, and H.~Wymeersch, ``{SISO}
	{RIS}-enabled joint {3D} downlink localization and synchronization,'' in
	\emph{Proc. IEEE ICC}, Montreal, Canada, Jun. 2021.
	
	\bibitem{keykhosravi2021semi}
	K.~Keykhosravi, M.~F. Keskin, S.~Dwivedi, G.~Seco-Granados, and H.~Wymeersch,
	``Semi-passive {3D} positioning of multiple {RIS}-enabled users,'' \emph{IEEE
		Trans.\ Vehicular Tech.}, to appear, 2021.
	
	\bibitem{habo_rss}
	H.~Zhang, H.~Zhang, B.~Di, K.~Bian, Z.~Han, C.~Xu, D.~Zhang, and L.~Song,
	``{RSS} fingerprinting based multi-user outdoor localization using
	reconfigurable intelligent surfaces,'' in \emph{Proc. IEEE ISMICT}, Apr.
	2021.
	
	\bibitem{elzanaty_TSP}
	A.~Elzanaty, A.~Guerra, F.~Guidi, and M.-S. Alouini, ``Reconfigurable
	intelligent surfaces for localization: Position and orientation error
	bounds,'' \emph{IEEE Trans. Signal Process.}, vol.~69, pp. 5386--5402, Aug.
	2021.
	
	\bibitem{Yiming_ICC21}
	Y.~Liu \emph{et~al.}, ``Reconfigurable intelligent surface aided wireless
	localization,'' in \emph{Proc. IEEE ICC}, Montreal, Canada, Jun. 2021.
	
	\bibitem{radarStefano}
	S.~Buzzi, E.~Grossi, M.~Lops, and L.~Venturino, ``Radar target detection aided
	by reconfigurable intelligent surfaces,'' \emph{IEEE Signal Process. Lett.},
	vol.~28, pp. 1315--1319, Jun. 2021.
	
	\bibitem{lu2021intelligent}
	W.~Lu, B.~Deng, Q.~Fang, X.~Wen, and S.~Peng, ``Intelligent reflecting
	surface-enhanced target detection in {MIMO} radar,'' \emph{IEEE Sensors
		Lett.}, vol.~5, no.~2, Feb. 2021.
	
	\bibitem{cramer_juan}
	J.~V. Alegría and F.~Rusek, ``Cramér-rao lower bounds for positioning with
	large intelligent surfaces using quantized amplitude and phase,'' in
	\emph{Proc. Asilomar CSSC}, Pacific Grove, USA, Nov. 2019, pp. 10--14.
	
	\bibitem{keykhosraviMulti}
	K.~Keykhosravi and H.~Wymeersch, ``Multi-{RIS} discrete-phase encoding for
	interpath-interference-free channel estimation,'' \emph{[Online]
		https://arxiv.org/abs/2106.07065}, 2021.
	
	\bibitem{kay1993fundamentals}
	S.~M. Kay, \emph{Fundamentals of statistical signal processing: Estimation
		Theory}.\hskip 1em plus 0.5em minus 0.4em\relax Prentice Hall PTR, 1993.
	
	\bibitem{tang2020wireless}
	W.~Tang, M.~Z. Chen, X.~Chen \emph{et~al.}, ``Wireless communications with
	reconfigurable intelligent surface: Path loss modeling and experimental
	measurement,'' \emph{IEEE Trans.\ Wireless Commun.}, vol.~20, no.~1, pp.
	421--439, Jan. 2021.
	
\end{thebibliography}
\fi


\end{document}